\theoremstyle{plain} \numberwithin{equation}{section}
\newtheorem{theorem}{Theorem}[section]
\theoremstyle{definition}
\newtheorem{definition}[theorem]{Definition}
\newtheorem{example}[theorem]{Example}
\newcommand\sgn{\operatorname{sgn}}
\renewcommand\H{\mathcal{H}}
\title{Hyperwalk Formulae for Even and Odd Laplacians in Finite CW-Hypergraphs}
\author{Ivan Contreras$^{1}$ \and Sarah Loeb$^{2}$ \and Chengzheng Yu$^3$}
\begin{document}
\maketitle

\footnotetext[1]{Department of Mathematics, University of Illinois at Urbana-Champaign, Urbana, IL 61801, \texttt{icontrer@illinois.edu}}
\footnotetext[2]{Department of Mathematics, College of William and Mary, Williamsburg, VA 23187, \texttt{sjloeb@wm.edu}}
\footnotetext[3]{Graduate School of Arts and Sciences, Georgetown University, Washington D.C., 20057,  \texttt{cy375@georgetown.edu}}

\begin{abstract}
In this note we provide a combinatorial interpretation for the powers of the hypergraph Laplacians. Our motivation comes from the discrete formulation of quantum mechanics and thermodynamics in the case of finite graphs, which suggest a natural extension to simplicial and CW-complexes. With this motivation, we also define generalizations of the odd Laplacian which is specific to hypergraphs arising from CW-complexes. We then provide a combinatorial interpretation for the powers of these Laplacians.
\end{abstract}

\section{Introduction}

In graph theory, the adjacency matrix plays a basic and fundamental role in  describing the combinatorics of a graph. It is a simple observation that the $k$-th power of the adjacency matrix identifies the number of walks of length $k$ between every two vertices. We may view this fact as the adjacency matrix giving a generating function for the number of walks in a graph. Using a similar argument, Yu~\cite{Chengzheng} defines generalized walks and shows these combinatorial objects are counted by the powers of the Laplacian matrix. In addition, he gives a combinatorial interpretation for the odd Laplacian, using the terminology from supersymmetric quantum mechanics, adapted to graphs (see e.g.~\cite{Mnev2,delVeccio}).

This approach is inspired by the partition function of quantum mechanics for a free particle confined in a finite graph $\Gamma$. The space of states $\mathcal H_{V(\Gamma)}$ of the system is the finite dimensional vector space $\mathbb C^{\vert V(\Gamma) \vert} $. Its supersymmetric version includes states on edges, so 
$\mathcal H_{\Gamma}=\mathbb C^{\vert V(\Gamma) \vert} \oplus \mathbb C^ {\vert E(\Gamma)\vert }$. The even and odd Laplacian matrices act in the first and second components respectively, and the partition function for supersymmetric quantum mechanics on a graph is thus defined as $Z_{\Gamma}(t)= \exp(\frac{-i}{\hbar} \Delta_{\Gamma})$, where $\Delta_{\Gamma}=\Delta^{+}\oplus \Delta^{-}$, i.e.\ the direct sum of the even and odd Laplacian operators.

This approach can be naturally extended to more general combinatorial objects, such as hypergraphs. A \emph{finite hypergraph} $\mathcal{H}$ is a finite set $V(\H)$ of \emph{vertices} and a set $E(\H)$ of \emph{edge}, which are non-empty subsets of $V(\H)$. Several spectral graph theoretical results have been extended to hypergraphs (see e.g.~\cite{Spec}).

In Section~\ref{sec_HG}, we extend the combinatorial interpretation of the even and odd Laplacian matrices as generating functions of walks on hypergraphs. Topological spaces which admit a CW-decomposition are natural examples of hypergraphs, for which the edges are precisely the cells of the decomposition. Motivated by extending the partition function for graph quantum mechanics to CW-complexes, in Section~\ref{sec_CW} we consider hypergraphs which arrise from CW-complexes. The advantage of dealing with these CW-hypergraphs is that we have a natural notion of orientation on edges, which comes from the orientation of cells in the complex. We use this notion to define Laplacians in terms of the oriented incidence, which are the boundary operators of the CW-complex. This allows us to define a signed edge-version of hyperwalks of a hypergraph $\mathcal H$ and compute the generating function for the number signed walks (Theorem~\ref{d+1_d_thm}). 

\section{Hyperwalks and edge-hyperwalks} \label{sec_HG}

Let $\mathcal{H}$ be a hypergraph. The \emph{incidence matrix} $I$ of $\mathcal{H}$ is an $n \times m$-matrix where \[I(i,j)=\begin{cases}
1 & \text{if } v_{i} \in e_{j},\\  
0 & \text{otherwise.}
\end{cases}\]

Let $\mathcal{H}$ be a hypergraph, and let $I$ be the incidence matrix of $\mathcal{H}$. The \emph{even hypergraph Laplacian} $\Delta^+(\H)$ of $\mathcal{H}$ is $II^{t}$. The \emph{odd hypergraph Laplacian} $\Delta^-(\H)$ of $\mathcal{H}$ is $I^{t}I$.
Note that the definition of incidence matrix (and subsequently the definition of Laplacian matrices) depends on the labeling of the edges and vertices. However, the resulting counting formulae will be consistent with the choice of the labeling.

\begin{definition}\label{def_hyperwalk}

A \emph{hyperwalk} in a hypergraph $\H$ is a sequence $v_{0},e_{1},v_{1},e_{2},\ldots,v_{k}$ of vertices $v_{i}$ and edges $e_{j}$, such that $v_{i-1} \in e_{i}$ and $v_{i} \in e_{i}$ for $1 \leq i \leq k$. The \emph{length} of a hyperwalk is its number of edges.

\end{definition}

Note that the definition of a hyperwalk does not require consecutive vertices to be distinct. This definition extends in a natural way the notion of generalized walks as defined in~\cite{Chengzheng} for finite graphs. 

\begin{definition}\label{def_edge_hyperwalk}
An \emph{edge-hyperwalk} in a hypergraph $\H$ is a sequence $e_{0},v_{1},e_{1},v_{2},\ldots,e_{k}$ of edges $e_{i}$ and vertices $v_{j}$, such that $v_{i} \in e_{i-1} \cap e_{i}$ $1 \leq i \leq k$. The \emph{length} of an edge-hyperwalk is its number of vertices.
\end{definition}

As in the vertex case, we do not require consecutive edges to be distinct. Example~\ref{example} illustrates of both types of hyperwalks. 

Theorem~\ref{hyper_walk_thm} gives a combinatorial interpretation of powers of the even Laplacian. It also gives the generating function for the number of hyperwalks on a CW-hypergraph $\mathcal H$. The theorem is a natural generalization of the well-known formula for the number of walks on a finite graph in terms of the powers of the adjacency matrix. A similar formula for regular graphs can be found in~\cite{Mnev2} and for general graphs in~\cite{delVeccio}.

\begin{theorem}\label{hyper_walk_thm}

$(\Delta^{+})^{k}(i,j)$ is the number of hyperwalks in $\mathcal{H}$ from $v_{i}$ to $v_{j}$ of length $k$.

\end{theorem}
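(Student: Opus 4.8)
The plan is to proceed by induction on the length $k$, mirroring the classical walk-counting argument for the adjacency matrix. First I would settle the base case $k=1$ by a direct computation. By the definition of the matrix product,
\[
\Delta^{+}(i,j) = (II^{t})(i,j) = \sum_{\ell} I(i,\ell)\,I^{t}(\ell,j) = \sum_{\ell} I(i,\ell)\,I(j,\ell),
\]
and each summand equals $1$ exactly when $v_{i}\in e_{\ell}$ and $v_{j}\in e_{\ell}$, and $0$ otherwise. Hence $\Delta^{+}(i,j)$ counts the edges $e_{\ell}$ containing both $v_{i}$ and $v_{j}$, which is precisely the number of length-$1$ hyperwalks $v_{i},e_{\ell},v_{j}$.

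For the inductive step, I would assume that $(\Delta^{+})^{k-1}(i,\ell)$ counts the hyperwalks of length $k-1$ from $v_{i}$ to $v_{\ell}$. Writing $(\Delta^{+})^{k}=(\Delta^{+})^{k-1}\Delta^{+}$ yields
\[
(\Delta^{+})^{k}(i,j) = \sum_{\ell} (\Delta^{+})^{k-1}(i,\ell)\,\Delta^{+}(\ell,j).
\]
The combinatorial content is then a prefix--suffix bijection: any hyperwalk $v_{i}=v_{0},e_{1},v_{1},\ldots,e_{k},v_{k}=v_{j}$ of length $k$ splits uniquely into its prefix $v_{0},e_{1},\ldots,e_{k-1},v_{k-1}$, a hyperwalk of length $k-1$ from $v_{i}$ to $v_{k-1}$, together with its terminal step $v_{k-1},e_{k},v_{j}$, a hyperwalk of length $1$. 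Conversely, any such pair whose shared endpoint is $v_{k-1}=v_{\ell}$ concatenates to a length-$k$ hyperwalk, the required incidences $v_{k-1},v_{j}\in e_{k}$ being exactly those guaranteed by the length-$1$ factor. Summing over the intermediate vertex $v_{\ell}$ matches the matrix sum term by term, closing the induction. Alternatively, one can expand the power in a single step, writing $(\Delta^{+})^{k}(i,j)$ as a sum over intermediate vertices $\ell_{1},\ldots,\ell_{k-1}$ (with $\ell_{0}:=i$, $\ell_{k}:=j$) and edges $m_{1},\ldots,m_{k}$ of the product $\prod_{s=1}^{k} I(\ell_{s-1},m_{s})\,I(\ell_{s},m_{s})$; this product is $1$ precisely when each $e_{m_{s}}$ contains both $v_{\ell_{s-1}}$ and $v_{\ell_{s}}$, so the surviving terms are in bijection with hyperwalks.

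Insofar as there is a delicate point, it is ensuring the bijection is exact rather than merely injective, and this hinges on the convention (noted after Definition~\ref{def_hyperwalk}) that hyperwalks need not have distinct consecutive vertices and may repeat edges. This is exactly what makes the counting clean: every algebraic term in the expansion, including those with $v_{\ell_{s-1}}=v_{\ell_{s}}$ or with a repeated edge index, corresponds to a genuine hyperwalk, so no terms are spuriously included or excluded. I therefore do not expect a genuine conceptual obstacle; the work is confined to the index bookkeeping in the expansion and to verifying that the incidence conditions encoded by the $I(\cdot,\cdot)$ factors coincide exactly with the defining conditions $v_{i-1},v_{i}\in e_{i}$ of a hyperwalk.
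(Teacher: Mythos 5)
Your proposal is correct and follows essentially the same route as the paper's proof: induction on $k$, with the base case computed directly from $\Delta^{+}=II^{t}$ and the inductive step splitting a length-$k$ hyperwalk into a length-$(k-1)$ prefix and a terminal length-$1$ step, summed over the penultimate vertex. The alternative full expansion over all intermediate indices is a nice additional remark but does not change the substance.
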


\begin{proof}

We proceed by induction on the exponent $k$. Let $V(\H) = \{v_1,\ldots,v_n\}$ and $E(\H) = \{e_1,\ldots,e_m\}$. For $k=1$, we have, $(\Delta^+)^1 = \Delta^+ = II^t$, where $I$ is the incidence matrix of $\H$.  

Thus, $\Delta^{+}(i,j)=\sum_{q=1}^{m}I_{iq}I_{jq}$. Since $I$ is a $(0,1)$-matrix, all contributions to the sum are also either 0 or 1. In particular, if $e_q$ contains both $v_i$ and $v_j$, then the term $I_{iq} I_{jq}$ is $1$ and otherwise it is $0$. This corresponds exactly to when there is a hyperwalk of length $1$ from $v_i$ to $v_j$ using $e_q$. The summation over the edges thus correctly counts the number of hyperwalks of length 1 from $v_i$ to $v_j$. 

Suppose the claim holds for $k \ge 1$.

We can view $(\Delta^+)^{k+1}$ as $(\Delta^+)^k \Delta^+$. Thus, 
\[ (\Delta^+)^{k+1}(i,j) = \sum_{l=1}^m (\Delta^+)^k (i,l) \cdot \Delta^+(l,j). \]
Any hyperwalk of length $k+1$ may be viewed as a hyperwalk of length $k$ from $v_i$ to some vertex $v_q$ followed by a hyperwalk of length 1 from $v_q$ to $v_j$. Since the number of such hyperwalks are counted by $(\Delta^+)^{k}(i,q)$ and $\Delta^+ (q,j)$ respectively, the product of the terms gives the number of hyperwalks of length $k+1$ going from $v_i$ to $v_j$ that have penultimate vertex $v_q$. Summing over all vertices thus gives the total number of hyperwalks of length $k+1$ from $v_i$ to $v_j$. 

\end{proof}

In a similar way, Theorem~\ref{edge_hyper_walk_thm} gives a combinatorial interpretation of the odd Laplacian matrix and determines the generating function for the number of edge-hyperwalks. From the physical point of view, the powers of the odd Laplacian appear in the Feynman expansion of the partition function of quantum mechanics for CW-complexes~\cite{Mnev2,delVeccio}. A similar expression for the case of graphs was proven in~\cite{Chengzheng}.

\begin{theorem}\label{edge_hyper_walk_thm}

$(\Delta^{-})^{k}(i,j)$ is the number of the edge-hyperwalks of $\mathcal{H}$ from $e_{i}$ to $e_{j}$ of length $k$.

\end{theorem}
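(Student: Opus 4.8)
The plan is to mirror the inductive proof of Theorem~\ref{hyper_walk_thm}, exploiting the symmetry between the two Laplacians: where the even Laplacian $\Delta^+ = II^t$ indexes vertices and counts hyperwalks, the odd Laplacian $\Delta^- = I^tI$ indexes edges and should count edge-hyperwalks. First I would establish the base case $k=1$. Here $\Delta^-(i,j) = (I^tI)(i,j) = \sum_{q=1}^{n} I_{qi}I_{qj}$, summing over vertices $v_q$. Since $I$ is a $(0,1)$-matrix, the term $I_{qi}I_{qj}$ equals $1$ precisely when $v_q \in e_i$ and $v_q \in e_j$, i.e.\ when $v_q \in e_i \cap e_j$, and is $0$ otherwise. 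By Definition~\ref{def_edge_hyperwalk}, such a $v_q$ is exactly the data of an edge-hyperwalk of length $1$ from $e_i$ to $e_j$ passing through $v_q$, so the sum counts these correctly.

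For the inductive step, I would assume the claim for some $k \ge 1$ and write $(\Delta^-)^{k+1} = (\Delta^-)^k\,\Delta^-$, giving
\[
(\Delta^-)^{k+1}(i,j) = \sum_{l=1}^{m} (\Delta^-)^k(i,l)\cdot \Delta^-(l,j).
\]
The combinatorial heart of the argument is a bijective decomposition: every edge-hyperwalk of length $k+1$ from $e_i$ to $e_j$ splits uniquely as an edge-hyperwalk of length $k$ from $e_i$ to some intermediate edge $e_l$, followed by an edge-hyperwalk of length $1$ from $e_l$ to $e_j$. The splitting point is the penultimate edge $e_l$ (equivalently, the edge appearing just before the final shared vertex $v_{k+1}$). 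By the induction hypothesis the first factor is counted by $(\Delta^-)^k(i,l)$ and by the base case the second factor is counted by $\Delta^-(l,j)$; multiplying gives the number of edge-hyperwalks of length $k+1$ with penultimate edge $e_l$, and summing over $l$ recovers all of them.

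I expect the main subtlety — rather than a genuine obstacle — to be making the concatenation bijection airtight, specifically checking that the shared-vertex adjacency conditions glue correctly at the seam. In Definition~\ref{def_edge_hyperwalk} a length-$1$ edge-hyperwalk from $e_l$ to $e_j$ requires a vertex $v \in e_l \cap e_j$, and this vertex becomes the final vertex $v_{k+1}$ of the full walk; one must confirm that no compatibility condition couples the two pieces beyond the shared endpoint edge $e_l$, so that the decomposition is both well-defined and reversible. Because edge-hyperwalks (like hyperwalks) do not require consecutive edges to be distinct, there are no degenerate cases to exclude and the counting is exact. The argument is the verbatim dual of Theorem~\ref{hyper_walk_thm} with the roles of $I$ and $I^t$, and of vertices and edges, interchanged, so once the seam is verified the induction closes immediately.
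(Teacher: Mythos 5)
Your proof is correct and follows essentially the same route as the paper's: induction on $k$, with the base case read off from the entries of $I^tI$ and the inductive step given by decomposing a length-$(k+1)$ edge-hyperwalk at its penultimate edge. In fact your base case is slightly more careful than the paper's, which writes the sum as $\sum_{q=1}^{m}I_{iq}I_{jq}$ over edges rather than, as you correctly do, $\sum_{q=1}^{n}I_{qi}I_{qj}$ over vertices.
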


\begin{proof}

Similar to the proof of theorem~\ref{hyper_walk_thm}, we proceed by induction on the exponent $k$. Let $V(\H) = \{v_1,\ldots,v_n\}$ and $E(\H) = \{e_1,\ldots,e_m\}$. For $k=1$, we have, $(\Delta^-)^1 = \Delta^- = I^tI$, where $I$ is the incidence matrix of $\H$.

Thus, $\Delta^{-}(i,j)=\sum_{q=1}^{m}I_{iq}I_{jq}$. Since $I$ is a $(0,1)$-matrix, all contributions to the sum are also either 0 or 1. In particular, if $v_q$ is contained in both $e_i$ and $e_j$, then the term $I_{iq} I_{jq}$ is $1$ and otherwise it is $0$. This corresponds exactly to when there is an edge-hyperwalk of length $1$ from $e_i$ to $e_j$ using $v_q$. The summation over the edges thus correctly counts the number of edge-hyperwalks of length 1 from $e_i$ to $e_j$. 

Suppose the claim holds for $k \ge 1$.

We can view $(\Delta^-)^{k+1}$ as $(\Delta^-)^k \Delta^-$. Thus,
\[ (\Delta^-)^{k+1}(i,j) = \sum_{i=1}^m (\Delta^-)^k (i,q) \cdot \Delta^-(q,j). \]
Any edge-hyperwalk of length $k+1$ may be viewed as an edge-hyperwalk of length $k$ from $e_i$ to some vertex $e_q$ followed by an edge-hyperwalk of length 1 from $e_q$ to $e_j$. Since the number of such edge-hyperwalks are counted by $(\Delta^-)^{k}(i,q)$ and $\Delta^- (q,j)$ respectively, the product of the terms gives the number of edge-hyperwalks of length $k+1$ going from $v_i$ to $v_j$ that have penultimate vertex $v_q$. Summing over all vertices thus gives the total number of edge-hyperwalks of length $k+1$ from $e_i$ to $e_j$.

\end{proof}

\section{Two types of hyperwalks in CW-hypergraphs}\label{sec_CW}

In this section we introduce CW-hypergraphs and different types of walks on edges  of consecutive dimension, as well as counting formulae for them.
First, we recall the definition of finite CW-complexes, which are a class of topological spaces of combinatorial type, for which the combinatorial description of quantum mechanics for graphs~\cite{Mnev2} can be naturally extended.

\begin{definition}\label{def_CW_complex}
A \emph{finite CW-complex} is a topological space $X$, together with a finite family $\{C_{i}\}_{i\in I}$ of subspaces of $X$, called \emph{cells}
such that the following conditions hold:
\begin{enumerate}
\item $X= \amalg_{i\in I}C_i.$
\item For each $n$-dimensional cell $C$ in the partition\footnote{The dimension of a cell is the dimension as a topological space.}, there is a continuous map $f: B^n \to X$  from the $n$-dimensional closed ball into $X$ such that the image of the interior of $B^n$ under $f$ is homeomorphic to $C$. The map $f$ is usually called the \emph{attaching map} of the cell.
\item The image $f(\partial (B^n))$ is contained in the union of some lower dimensional cells of $X$.
\item A subset of $X$ is closed if and only if it intersects each cell in a closed set.
\end{enumerate}
\end{definition}

Finite graphs, simplicial complexes and polytpes are particular instances of CW-complexes. For the purposes of this paper we will consider hypergraphs constructed from finite CW-complexes.

In a CW-complex $X$, we refer to the $d$-dimensional cells as \emph{$d$-cells.} The \emph{$k$-skeleton} of $X$ is the sub-CW-complex consisting of the cells of dimension at most $k$.

\begin{definition}\label{def_CW_hypergraph}

A finite hypergraph $\mathcal H$ is called a \emph{CW-hypergraph} if there is a CW-complex $X$ for which the vertices of $\H$ are the 0-cells of $X$ and the edges of $\H$ are the 0-skeletons of the $n$-cells with $n \ge 1$. The \emph{orientation} of an edge in a CW-hypergraph is given by the ordering of 0-cells in the attaching maps of $X$.

\end{definition}

Let $e$ be an edge of a CW-hypergraph $\mathcal H$. The \emph{dimension} of $e$, denoted by $\dim(e)$, is the dimension of the corresponding cell in the CW-complex $X$. For convience, we refer to the edges of dimension $d$ as \emph{$d$-edges}. Note that a $d$-edge may have more than $d$ vertices

Finite graphs are particular examples of uniform CW-hypergraphs, in which all the edges are 1-dimensional. Every finite CW-complex $X$ has naturally associated a finite graph to it, specifically the 1-skeleton.

Let $\mathcal{H}$ be a CW-hypergraph with vertices $V(\H)=\{v_{1},v_{2},\ldots,v_{l}\}$, $d$-edges $E^d(\H)=\{e_{1}^d,\ldots,e_{q}^d\}$, and $(d+1)$-edges $E^{d+1}(\H) = \{e_1^{d+1},\ldots,e_p^{d+1}\}$. The \emph{$d$-incidence matrix} $I_{d}$ of $\mathcal{H}$ is an $|E^{d}(\H)| \times |E^{d+1}(\H)|$ matrix given by \[I_{d}(i,j)=\begin{cases}
+1 & \text{if } e_{i}^d \subset e_{j}^{d+1} \text{ and they have the same orientation,}\\  
-1 & \text{if } e_{i}^d \subset e_{j}^{d+1} \text{ and they have opposite orientation,}\\
0 & \text{if } e_{i}^d \not\subset e_{j}^{d+1}. 
\end{cases}\] 

Note that the $d$-incidence matrix in this definition is different from the traditional incidence matrix for hypergraphs.

\begin{definition}\label{def_hypergraph_dim}

Let $\mathcal{H}$ be a CW-hypergraph with $I_{d}: \mathbb{C}^{d+1} \rightarrow \mathbb{C}^{d}$ is the $d$-incidence matrix of $\mathcal{H}$. The \emph{even CW-hypergraph Laplacian} with dimension $d$ is
\begin{equation}
\Delta_{d}^{+}:=I_{d}I_{d}^{t}.
\end{equation}
The \emph{odd CW-hypergraph Laplacian} with dimension $d$ is 
\begin{equation}
\Delta_{d}^{-}:=I_{d}^{t}I_{d}.
\end{equation}

\end{definition}

\begin{definition}\label{def_d_d+1}

Let $\mathcal{H}$ be a CW-hypergraph with $V(\H) =\{v_{1},v_{2},\ldots,v_{l}\}$, $E^{d}(\H)=\{e^{d}_{1},\ldots,e^{d}_{q}\}$ and $E^{d+1}(\H) = \{e^{d+1}_{1},\ldots,e^{d+1}_{p}\}$ $(d>0)$. A \emph{$(d,d+1)$-hyperwalk} is a sequence $e^{d}_{i_0},e^{d+1}_{j_1},e^{d}_{i_1},e^{d+1}_{j_2},\ldots,e^{d}_{i_k}$ of hypergraph $d$-edges and $(d+1)$-edges, where repetition is allowed, and each $d$-edges is contained in the $(d+1)$-edges that immediately precede and immediately follow it. 

\end{definition}

The \emph{length} of a $(d,d+1)$-hyperwalk is the number of $(d+1)$-edges. In addition, we consider a $\pm 1$ orientation of hyperwalks. The \emph{sign of} $e^{d}_{i}$ \emph{in} $e^{d+1}_{j}$ is $+1$ if $e^{d}_{i}$ and $e^{d+1}_{j}$ has the same orientation, and $-1$ otherwise. We use $\sgn(e^{d}_{i} \subset e^{d+1}_{j})$ to denote the sign of $e^{d}_{i}$ in $e^{d+1}_{j}$. Then, the \emph{sign} of the $(d,d+1)$-hyperwalk is given by $\prod_{r=1}^{k}(\sgn(e^{d}_{i_{r-1}} \subset e^{d+1}_{j_{r}}) \times \sgn(e^{d}_{i_{r}} \subset e^{d+1}_{j_{r}}))$.

\begin{definition}\label{def_d+1_d}

Let $\mathcal{H}$ be a CW-hypergraph with $V(\H) =\{v_{1},v_{2},\ldots,v_{l}\}$, $E^{d}(\H)=\{e^{d}_{1},\ldots,e^{d}_{q}\}$ and $E^{d+1}(\H) = \{e^{d+1}_{1},\ldots,e^{d+1}_{p}\}$ $(d>0)$. A \emph{$(d+1,d)$-hyperwalk} is a sequence $e^{d+1}_{i_0},e^{d}_{j_1},e^{d+1}_{i_1},e^{d}_{j_2},\ldots,e^{d+1}_{i_k}$ of hypergraph $(d+1)$-edges and $d$-edges, where repetition is allowed, and each $(d+1)$-edges contains the $d$-edges that immediately precede and immediately follow it.
\end{definition}

The \emph{length} of a $(d+1,d)$-hyperwalk is the number of $d$-edges.

In the definition of sign of $e^{d}_{i}$ in $e^{d+1}_{j}$ given in Definition~\ref{def_d_d+1}, the \emph{sign} of the $(d+1,d)$-hyperwalk is given by $\prod_{r=1}^{k}(\sgn(e^{d}_{j_{r-1}} \subset e^{d+1}_{i_{r}}) \times \sgn(e^{d}_{j_{r}} \subset e^{d+1}_{i_{r}}))$. Example~\ref{example} includes a (2,1)-hyperwalk and a (1,2)-hyperwalk in a CW-hypergraph. 

Theorem~\ref{d_d+1_thm} determines the generating function for the number of edge-hyperwalks, which depend on the orientation of the CW-complex. The analogous construction for oriented finite graphs is given in~\cite{Chengzheng} and was inspired by the combinatorial description for quantum mechanics~\cite{Mnev2,delVeccio}, for which the powers of the odd Laplacian should correspond to the odd-dimensional cells of the CW-complex.

\begin{theorem}\label{d_d+1_thm}

$(\Delta_{d}^{+})^{k}(i,j)=\sum_{h,i\rightarrow j,k}\sgn(h)$ where $h,i\rightarrow j,k$ represents the set of $(d,d+1)$-hyperwalks $h$ from $d$-edge $i$ to $d$-edge $j$ of length $k$.

\end{theorem}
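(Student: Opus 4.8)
The plan is to mirror the inductive arguments used for Theorems~\ref{hyper_walk_thm} and~\ref{edge_hyper_walk_thm}, proceeding by induction on the length $k$, but now carefully tracking the orientation signs carried by the $d$-incidence matrix $I_{d}$. Writing $E^{d}(\H) = \{e_{1}^{d},\ldots,e_{q}^{d}\}$ and $E^{d+1}(\H) = \{e_{1}^{d+1},\ldots,e_{p}^{d+1}\}$, I would first record the key identity that $I_{d}(i,s) = \sgn(e_{i}^{d} \subset e_{s}^{d+1})$ whenever $e_{i}^{d} \subset e_{s}^{d+1}$, and $I_{d}(i,s) = 0$ otherwise.

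For the base case $k = 1$, I would expand
\[
(\Delta_{d}^{+})(i,j) = (I_{d}I_{d}^{t})(i,j) = \sum_{s=1}^{p} I_{d}(i,s)\,I_{d}(j,s).
\]
A summand is nonzero precisely when $e_{i}^{d}$ and $e_{j}^{d}$ are both contained in $e_{s}^{d+1}$, which is exactly the condition for $e_{i}^{d},e_{s}^{d+1},e_{j}^{d}$ to be a $(d,d+1)$-hyperwalk of length $1$; and in that case the summand equals $\sgn(e_{i}^{d} \subset e_{s}^{d+1})\,\sgn(e_{j}^{d} \subset e_{s}^{d+1})$, which is precisely the sign of that hyperwalk. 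Summing over $s$ then yields $\sum_{h}\sgn(h)$ over all length-$1$ hyperwalks from $e_{i}^{d}$ to $e_{j}^{d}$.

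For the inductive step, assuming the claim for $k$, I would write $(\Delta_{d}^{+})^{k+1}(i,j) = \sum_{l=1}^{q} (\Delta_{d}^{+})^{k}(i,l)\,\Delta_{d}^{+}(l,j)$ and invoke the inductive hypothesis on the first factor and the base case on the second. The crux of the argument---and the one place that genuinely differs from the unsigned case---is the \emph{multiplicativity of the sign under concatenation}: any length-$(k+1)$ hyperwalk $h$ from $e_{i}^{d}$ to $e_{j}^{d}$ with penultimate $d$-edge $e_{l}^{d}$ factors uniquely as a length-$k$ hyperwalk $h'$ from $e_{i}^{d}$ to $e_{l}^{d}$ followed by a length-$1$ hyperwalk $h''$ from $e_{l}^{d}$ to $e_{j}^{d}$, and I must verify that $\sgn(h) = \sgn(h')\,\sgn(h'')$. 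This follows by inspecting the product $\prod_{r=1}^{k+1}$ defining $\sgn(h)$: the factors for $r = 1,\ldots,k$ assemble to $\sgn(h')$, while the single factor for $r = k+1$ is exactly $\sgn(h'')$, since the shared $d$-edge $e_{l}^{d} = e_{i_{k}}^{d}$ appears as the last edge of $h'$ and the first edge of $h''$.

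Once this is in hand, expanding the product of sums and regrouping by the penultimate edge $e_{l}^{d}$ gives
\[
(\Delta_{d}^{+})^{k+1}(i,j) = \sum_{l=1}^{q} \sum_{h'}\sum_{h''} \sgn(h')\,\sgn(h'') = \sum_{h} \sgn(h),
\]
the final sum ranging over all length-$(k+1)$ hyperwalks from $e_{i}^{d}$ to $e_{j}^{d}$, which completes the induction. The main obstacle is essentially bookkeeping: confirming that the sign definition splits cleanly at the gluing edge, which I expect to be routine given the product form of the sign, so that the only real content beyond Theorem~\ref{hyper_walk_thm} is this factorization identity.
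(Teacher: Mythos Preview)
Your proposal is correct and follows essentially the same approach as the paper: induction on $k$, with the base case unwound from $I_{d}I_{d}^{t}$ and the inductive step obtained by factoring $(\Delta_{d}^{+})^{k+1} = (\Delta_{d}^{+})^{k}\Delta_{d}^{+}$ and decomposing a length-$(k+1)$ hyperwalk at its penultimate $d$-edge. If anything, your treatment of the sign multiplicativity under concatenation is slightly more explicit than the paper's, which simply asserts that the sign of the combined hyperwalk is the product of the two constituent signs.
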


\begin{proof}

We proceed by induction on the exponent $k$. Let $E^{d}(\H) = \{e_1^d,\ldots,e_q^d \}$ and $E^{d+1}(\H) = \{e_1^{d+1},\ldots,e_p^{d+1}\}$. When $k=1$, we have $(\Delta_d^+)^1 = \Delta_d^+ = I_d I_d^t$, where $I_d$ is the $d$-incidence matrix. 

Therefore, $\Delta_{d}^{+}(i,j)=\sum_{l=1}^{d}I_{il}I_{jl}$. Since the entries of $I_d$ are from $\{-1,0,+1\}$, each contribution to the sum is either $-1$, $0$, or $1$. If both $e_i^d$ and $e_j^d$ are contained in $e_l^{d+1}$, then we get a non-zero contribution. This contribution is $+1$ if their orientations agree and is $-1$ otherwise. If one, or both, of $e_i^d$ and $e_j^d$ are not contained in $e_l^{d+1}$, then the contribution is $0$. The summation over the signs of walks of length 1 from $e_i^d$ to $e_j^d$ are thus correctly computed by $\Delta_d^+(i,j)$. 

Suppose the claim holds for $k \ge 1$. We can view $(\Delta_d^+)^{k+1}$ as $(\Delta_d^+)^k \Delta_d^+$. Thus, 
\[ (\Delta_d^+)^{k+1}(i,j) = \sum_{l=1}^q (\Delta_d^+)^k(i,l) \cdot \Delta_d^+(l,j). \]

Any $(d,d+1)$-hyperwalk of length $k+1$ may be viewed as a $(d,d+1)$-hyperwalk of length $k$ from $e_i^d$ to some $d$-edge $e_l^d$ followed by a $(d,d+1)$-hyperwalk from $e_l^d$ to $e_j^d$ of length 1. The sign of the combined hyperwalk is the product of the two signs of the $(d,d+1)$-hyperwalk of length $k$ and the $(d,d+1)$-hyperwalk of length 1. Since the sum of signed $(d,d+1)$-hyperwalks of length $k$ and length 1 are counted by $(\Delta_d^+)^k(i,l)$ and $(\Delta_d^+)(l,j)$ respectively, by the Distributive Law, the sum of the signs of the $(d,d+1)$-hyperwalks of length $k+1$ from $e_i^d$ to $e_j^d$ with penultimate $d$-edge $e_l^d$ is given by their product. Summing over all possible penulimate $d$-edges gives the total sum of the signs of $(d,d+1)$-hyperwalks from $e_i^d$ to $e_j^d$.
\end{proof}

Theorem~\ref{d+1_d_thm} computes the number of $(d+1,d)$-hyperwalks on $\mathcal H$ in terms of the powers of the odd Laplacian associated to the $k$-cells in the CW-decomposition associated to the hypergraph $\mathcal H$.

\begin{theorem}\label{d+1_d_thm}

$(\Delta_{d}^{-})^{k}(i,j)=\sum_{h,i\rightarrow j,k}\sgn(h)$ where $h,i\rightarrow j,k$ represents the set of $(d+1,d)$-hyperwalks $h$ from $(d+1)$-edge $i$ to $(d+1)$-edge $j$ of length $k$.

\end{theorem}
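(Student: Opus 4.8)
The plan is to mirror the inductive argument used for Theorem~\ref{d_d+1_thm}, interchanging the roles of $d$-edges and $(d+1)$-edges and replacing $I_d I_d^t$ by $I_d^t I_d$; this is the same passage from the even to the odd Laplacian that takes Theorem~\ref{hyper_walk_thm} to Theorem~\ref{edge_hyper_walk_thm}, now carrying signs. Write $E^d(\H)=\{e_1^d,\ldots,e_q^d\}$ and $E^{d+1}(\H)=\{e_1^{d+1},\ldots,e_p^{d+1}\}$, so that $\Delta_d^-=I_d^tI_d$ is a $p\times p$ matrix indexed by $(d+1)$-edges. For the base case $k=1$ I would expand
\[
\Delta_d^-(i,j)=\sum_{l=1}^{q}I_d(l,i)\,I_d(l,j),
\]
and note that the $l$-th summand is nonzero exactly when the $d$-edge $e_l^d$ lies in both $e_i^{d+1}$ and $e_j^{d+1}$, in which case it equals $\sgn(e_l^d\subset e_i^{d+1})\cdot\sgn(e_l^d\subset e_j^{d+1})$. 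This is precisely the sign of the length-$1$ $(d+1,d)$-hyperwalk $e_i^{d+1},e_l^d,e_j^{d+1}$, so the sum agrees with $\sum_{h}\sgn(h)$ over all such walks.

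For the inductive step, assuming the claim for $k$, I would use $(\Delta_d^-)^{k+1}=(\Delta_d^-)^k\Delta_d^-$ to write
\[
(\Delta_d^-)^{k+1}(i,j)=\sum_{l=1}^{p}(\Delta_d^-)^k(i,l)\cdot\Delta_d^-(l,j),
\]
where now $l$ ranges over $(d+1)$-edges. Every $(d+1,d)$-hyperwalk of length $k+1$ from $e_i^{d+1}$ to $e_j^{d+1}$ splits uniquely at its penultimate $(d+1)$-edge $e_l^{d+1}$ into a length-$k$ hyperwalk from $e_i^{d+1}$ to $e_l^{d+1}$ followed by a length-$1$ hyperwalk from $e_l^{d+1}$ to $e_j^{d+1}$. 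The crux is that $\sgn(h)$ is multiplicative under this splitting, so that by the inductive hypothesis and the distributive law the product $(\Delta_d^-)^k(i,l)\cdot\Delta_d^-(l,j)$ equals $\sum_h\sgn(h)$ over all length-$(k+1)$ walks with penultimate $(d+1)$-edge $e_l^{d+1}$; summing over $l$ finishes the step.

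The only genuinely delicate point is this sign bookkeeping, and it is where I would concentrate. I must verify that $\sgn(h)$ factors as the product of the signs of its two pieces. This holds because $\sgn(h)$ is a product indexed by the $d$-edges of $h$ — each $d$-edge $e^d_{j_r}$ contributing the product of its signs in the two $(d+1)$-edges flanking it — and the $d$-edges partition between the length-$k$ and length-$1$ subwalks; in particular the two sign factors meeting at the junction $e_l^{d+1}$ belong to \emph{different} $d$-edges, so nothing is double-counted. I would also flag that the sign formula preceding Definition~\ref{def_d+1_d} is cleanest when indexed by the $d$-edges of the walk, namely $\prod_{r=1}^{k}\sgn(e^d_{j_r}\subset e^{d+1}_{i_{r-1}})\cdot\sgn(e^d_{j_r}\subset e^{d+1}_{i_r})$, which is the form that makes this factorization transparent. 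No orientation subtlety beyond this arises, since the entries of $I_d$ already encode the relative orientations as $\pm1$.
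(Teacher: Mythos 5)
Your proof is correct and follows the same inductive argument as the paper's: the base case identifies each nonzero summand of $I_{d}^{t}I_{d}$ with the sign of a length-one $(d+1,d)$-hyperwalk, and the inductive step splits a length-$(k+1)$ walk at its penultimate $(d+1)$-edge and uses multiplicativity of the sign together with the distributive law. Your explicit check that the sign factors over the $d$-edges of the walk, and your corrected indexing of both the matrix product and the sign formula, are if anything more careful than the paper's own write-up.
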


\begin{proof}

We proceed by induction on the exponent $k$. Let $E^{d}(\H) = \{e_1^d,\ldots,e_q^d \}$ and $E^{d+1}(\H) = \{e_1^{d+1},\ldots,e_p^{d+1}\}$. When $k=1$, we have $(\Delta_d^-)^1 = \Delta_d^- = I_d^t I_d$, where $I_d$ is the $d$-incidence matrix. 

Therefore, $\Delta_{d}^{-}(i,j)=\sum_{l=1}^{d}I_{il}I_{jl}$. Since the entries of $I_d$ are from $\{-1,0,+1\}$, each contribution to the sum is either $-1$, $0$, or $1$. If both $e_i^{d+1}$ and $e_j^{d+1}$ contain $e_l^{d}$, then we get a non-zero contribution. This contribution is $+1$ if their orientations agree and is $-1$ otherwise. If one, or both, of $e_i^{d+1}$ and $e_j^{d+1}$ do not contain $e_l^{d}$, then the contribution is $0$. The summation over the signs of walks of length 1 from $e_i^{d+1}$ to $e_j^{d+1}$ are thus correctly computed by $\Delta_d^-(i,j)$. 

Suppose the claim holds for $k \ge 1$. We can view $(\Delta_d^-)^{k+1}$ as $(\Delta_d^-)^k \Delta_d^-$. Thus, 
\[ (\Delta_d^-)^{k+1}(i,j) = \sum_{l=1}^q (\Delta_d^-)^k(i,l) \cdot \Delta_d^-(l,j). \]

Any $(d+1,d)$-hyperwalk of length $k+1$ may be viewed as a $(d+1,d)$-hyperwalk of length $k$ from $e_i^{d+1}$ to some $(d+1)$-edge $e_l^{d+1}$ followed by a $(d+1,d)$-hyperwalk from $e_l^{d+1}$ to $e_j^{d+1}$ of length 1. The sign of the combined $(d+1,d)$-hyperwalk is the product of the two signs of the $(d+1,d)$-hyperwalk of length $k$ and the $(d+1,d)$-hyperwalk of length 1. Since the sum of signed $(d+1,d)$-hyperwalks of length $k$ and length 1 are counted by $(\Delta_d^-)^k(i,l)$ and $(\Delta_d^-)(l,j)$ respectively, by the Distributive Law, the sum of signed $(d+1,d)$-hyperwalks of length $k+1$ from $e_i^{d+1}$ to $e_j^{d+1}$ with penultimate $(d+1)$-edge $e_l^{d+1}$ is given by their product. Summing over all possible penulimate $(d+1)$-edges gives the total sum of signed $(d+1,d)$-hyperwalks from $e_i^{d+1}$ to $e_j^{d+1}$.

\end{proof}

\begin{figure}[h]
\centering
\begin{subfigure}{.4\textwidth}
\begin{tikzpicture}
    \tikzset{
    mid arrow/.style={
    decoration={markings,mark=at position 0.5 with {\arrow[scale = 2]{>}}},
    postaction={decorate},
    shorten >=0.4pt}}
	\path (0,0) coordinate (X1); \fill (X1) circle (3pt);
	\path (4,0) coordinate (X2); \fill (X2) circle (3pt);
	\path (2,3.46) coordinate (X3); \fill (X3) circle (3pt);
	\path (2.4,0.8) coordinate (X4); \fill (X4) circle (3pt);
    \path (1.6,1.6) coordinate (X11);
    \path (0.4,3) coordinate (X12);
    \path (2.2,0.4) coordinate (X21);
    \path (2.8,-0.4) coordinate (X22);
    \path (2.8,1.2) coordinate (X31);
    \path (3.6,1.4) coordinate (X32);
	\node[below left] at (X1) {$v_{1}$}; 
	\node[below right] at (X2) {$v_{2}$}; 
	\node[above] at (X3) {$v_{3}$};
	\node[right] at (X4) {$v_{4}$};
    \node[left] at (X12) {$e_{7}$};
    \node[below] at (X22) {$e_{8}$};
    \node[right] at (X32) {$e_{9}$};
	\draw (X1) -- (X2) node[midway,below]{$e_{1}$};
    \draw (X1) -- (X3) node[midway,left]{$e_{2}$};
    \draw (X2) -- (X3) node[midway,right]{$e_{3}$};
	\draw (X1) -- (X4) node[midway]{$e_{4}$};
    \draw (X2) -- (X4) node[midway]{$e_{5}$};
    \draw (X3) -- (X4) node[midway]{$e_{6}$};
    \draw (X11) -- (X12);
    \draw (X21) -- (X22);
    \draw (X31) -- (X32);
	\filldraw[pattern=north west lines,line join=round, pattern color = gray] (X1) to (X3) -- (X4);
    \filldraw[pattern=dots,line join=round, pattern color = gray] (X2) to (X3) -- (X4);
    \filldraw[pattern=north east lines,line join=round, pattern color = gray] (X1) to (X2) -- (X4);
    \filldraw[pattern=north west lines,line join=round, pattern color = gray] (X1) to (X2) -- (X4);
\end{tikzpicture}
      \caption{Hypergraph}
   \label{Figure 1}
\end{subfigure}
\qquad
\qquad
\begin{subfigure}{.4\textwidth}  
\begin{tikzpicture}
    \tikzset{
    mid arrow/.style={
    decoration={markings,mark=at position 0.5 with {\arrow[scale = 2]{>}}},
    postaction={decorate},
    shorten >=0.4pt}}
	\path (0,0) coordinate (X1); 
	\path (4,0) coordinate (X2);
	\path (2,3.46) coordinate (X3);
	\path (2.4,0.8) coordinate (X4);
    \path (1.6,1.6) coordinate (X11);
    \path (0.4,3) coordinate (X12);
    \path (2.2,0.4) coordinate (X21);
    \path (2.8,-0.4) coordinate (X22);
    \path (2.8,1.2) coordinate (X31);
    \path (3.6,1.4) coordinate (X32);
	\node[below left] at (X1) {$v_{1}$}; 
	\node[below right] at (X2) {$v_{2}$}; 
	\node[above] at (X3) {$v_{3}$};
	\node[right] at (X4) {$v_{4}$};
    \node[left] at (X12) {$e^{2}_{1}$};
    \node[below] at (X22) {$e^{2}_{2}$};
    \node[right] at (X32) {$e^{2}_{3}$};
	\draw (X1) -- (X2) node[midway,below]{$e^{1}_{1}$};
    \draw (X1) -- (X3) node[midway,above left]{$e^{1}_{2}$};
    \draw (X2) -- (X3) node[midway,above right]{$e^{1}_{3}$};
	\draw (X1) -- (X4) node[midway, above right]{$e^{1}_{4}$};
    \draw (X2) -- (X4) node[midway, above]{$e^{1}_{5}$};
    \draw (X3) -- (X4) node[midway, below]{$e^{1}_{6}$};
	\filldraw[pattern=north west lines,line join=round, pattern color = gray] (X1) to (X3) -- (X4);
    \filldraw[pattern=dots,line join=round, pattern color = gray] (X2) to (X3) -- (X4);
    \filldraw[pattern=north east lines,line join=round, pattern color = gray] (X1) to (X2) -- (X4);
    \filldraw[pattern=north west lines,line join=round, pattern color=gray] (X1) to (X2) -- (X4);
    \draw (X11) -- (X12);
    \node[scale = 1.5] at (X11) {$\circlearrowright$};
    \draw (X21) -- (X22);
    \node[scale = 1.5] at (X21) {$\circlearrowleft$};
    \draw (X31) -- (X32);
    \node[scale = 1.5] at (X31) {$\circlearrowleft$};
    \draw[mid arrow] (X1) -- (X2);
    \draw[mid arrow] (X1) -- (X3);
    \draw[mid arrow] (X1) -- (X4);
    \draw[mid arrow] (X2) -- (X3);
    \draw[mid arrow] (X2) -- (X4);
    \draw[mid arrow] (X3) -- (X4);
    \fill (X1) circle (3pt);
    \fill (X2) circle (3pt);
    \fill (X3) circle (3pt); 
    \fill (X4) circle (3pt); 
\end{tikzpicture}
\caption{CW-hypergraph}
\label{Figure 2}
\end{subfigure}
\caption{Examples of a hypergraph and a CW-hypergraph} \label{Wholefigure}
\end{figure}
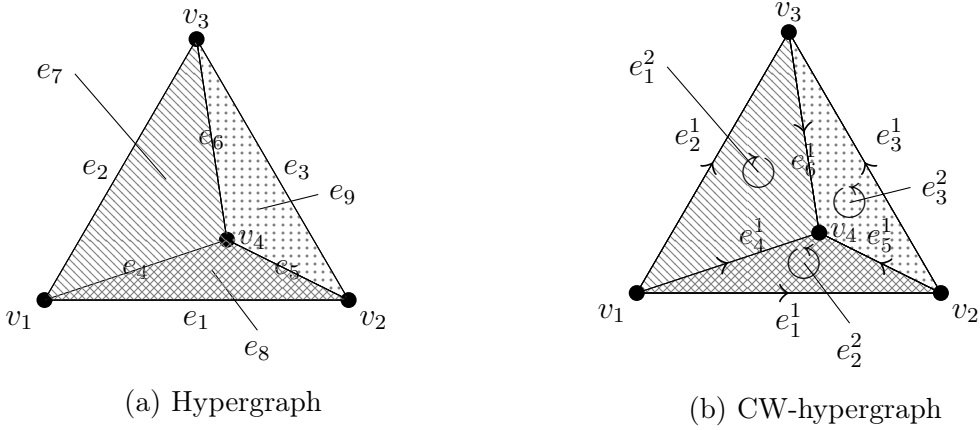

\begin{example}\label{example}
The following are examples of both a hypergraph (Figure~\ref{Figure 1}) and a CW-hypergraph (Figure~\ref{Figure 2}). The hypergraph has nine edges of sizes 2 and 3. The CW-hypergraph is four vertices, six 1-edges and three 2-edges.

For the hypergraph in Figure~\ref{Figure 1}, Theorem~\ref{hyper_walk_thm} shows that there are 5886 hyperwalks from $v_{1}$ to $v_{3}$ of length 4. Similarly, using Theorem~\ref{edge_hyper_walk_thm} we find that there are 384 edge-hyperwalks of length 3 from $e_{7}$ to $e_{9}$. 

For the CW-hypergraph in Figure~\ref{Figure 2}, the sign of $e_6^1$ in $e_1^2$ is negative. An example of a $(1,2)$-hyperwalk from $e^{1}_{1}$ to $e^{1}_{6}$ of length 4 is  $e^{1}_{1},e^{2}_{2},e^{1}_{4},e^{2}_{1},e^{1}_{6},e^{2}_{3},e^{1}_{5},e^{2}_{3},e^{1}_{6}$. This $(2,1)$-hyperwalk has positive sign. On the other hand, $e^{2}_{1},e^{1}_{4},e^{2}_{2},e^{1}_{5},e^{2}_{3}$ is a $(2,1)$-hyperwalk from $e^{2}_{1}$ to $e^{2}_{3}$ of length 2 and with negative sign. In total, Theorem~\ref{d_d+1_thm} shows that the sum of signs of the $(1,2)$-hyperwalks from $e^1_1$ to $e^1_6$ of length 4 is 0. Simiarly, Theorem~\ref{d+1_d_thm} shows that the sum of signs of the $(2,1)$-hyperwalks from $e^2_1$ to $e^2_3$ of length 2 is $+1$.
\end{example}

\section*{Acknowledgements}
We thank the Illinois Geometry Lab (IGL) for the support during the project \emph{Quantum Mechanics for Graphs and CW-Complexes}, from which the results of this paper were derived. I.C thanks Pavel Mn\"ev for fruitful discussions on Graph and CW-quantum mechanics.

\end{document}